\def\fmod#1 #2{#1\ ({\rm mod}\ #2)}
\def\alt{{\rm alt}}
\newcommand{\rmnum}[1]{\romannumeral #1}
\newcommand{\Rmnum}[1]{\expandafter\@slowromancap\romannumeral #1@}
\title{The non-abelian squares are not context-free}
\author{Shuo Tan\\
School of Computer Science \\
University of Waterloo \\
Waterloo, ON  N2L 3G1 \\
Canada \\
{\tt sh22tan@gmail.com} }
\theoremstyle{plain}
\newtheorem{theorem}{Theorem}
\newtheorem{lemma}[theorem]{Lemma}
\theoremstyle{definition}
\newtheorem{definition}[theorem]{Definition}
\newtheorem{example}[theorem]{Example}
\theoremstyle{remark}
\begin{document}

\maketitle


Erd\H{o}s introduced the notion of \textit{abelian square} in a 1961
paper \cite{Erdos:1961}:

\begin{definition}
A word $w$ is an \textit{abelian square} if there exist two words $s$ and $t$
such that $w = st$ and $t$ is a permutation of $s$.
\end{definition}

At the DLT 2011 conference in Milan, Italy,
Maxime Crochemore asked if the language of non-abelian squares is context-free. In this note, we answer his question.

Let $\Sigma = \{0,1\}$ be the alphabet in this context. Let $L$ denote the language of all words that are not abelian squares over $\Sigma$. Our goal is to prove the following:

\begin{theorem}
   L is not context-free.
\end{theorem}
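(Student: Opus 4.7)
Suppose for contradiction that $L$ is context-free. Then $L' := L \cap (0^+ 1^+ 0^+ 1^+)$ is also context-free, since the context-free languages are closed under intersection with regular languages. Each word of $L'$ has a unique representation $0^a 1^b 0^c 1^d$ with $a, b, c, d \geq 1$. A direct calculation, splitting on where the midpoint $(a+b+c+d)/2$ falls among the four blocks, shows that $0^a 1^b 0^c 1^d$ is an abelian square exactly when $a + b + c + d$ is even and either (i) $a = c$ and $d \leq b$, or (ii) $b = d$ and $a \leq c$ holds. Hence $L'$ consists of those $0^a 1^b 0^c 1^d$ with positive exponents whose length is odd or that fail both (i) and (ii).

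\textbf{Applying Ogden's lemma.} I would apply Ogden's lemma to $L'$ with a family of words that are ``almost'' abelian squares, so that small pumps can restore the missing condition. A natural candidate is $w_n := 0^n 1^n 0^n 1^{n + k}$ for a fixed positive integer $k$; here $a = c = n$ holds but $d = n + k > n = b$ violates (i), so $w_n \in L'$. For $n$ large, I would mark a strategically chosen block (or union of blocks); Ogden's lemma then yields a decomposition $w_n = uvxyz$ with at least one marked position in $vy$ and $uv^i x y^i z \in L'$ for all $i \geq 0$. Because the pumped word must remain in the four-block pattern $0^+ 1^+ 0^+ 1^+$, each of $v$ and $y$ is contained in a single block, so the analysis reduces to a finite case split on which blocks contain $v$ and $y$.

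\textbf{Main obstacle.} In each case I would exhibit an index $i \geq 0$ for which $uv^i x y^i z$ satisfies (i) or (ii), contradicting its membership in $L'$. The hard part is that certain decompositions resist this choice: if $|v|+|y|$ fails to divide the target offset $k = d - b$, no single $i$ may close the gap. I would neutralize this obstacle by taking $k = p!$, where $p$ is the Ogden constant, so that every admissible pump length $s \leq p$ divides $k$ and some $i$ always gives $(i-1)s = k$. Parity constraints—an abelian square must have even length—are handled by choosing $k$ even, and the remaining awkward subcases (in which $v$ and $y$ straddle block boundaries so that pumping simultaneously shifts several exponents) are excluded by marking more than one block, which restricts the Ogden decomposition to the tractable configurations. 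I expect the bulk of the technical work to lie precisely in this last step of ruling out every Ogden decomposition.
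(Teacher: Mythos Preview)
Your characterization of the abelian squares inside $0^+1^+0^+1^+$ is correct, and the reduction to single-block $v$ and $y$ is fine. The gap is in the choice of witness word: $w_n = 0^n 1^n 0^n 1^{n+k}$ admits an Ogden decomposition that survives \emph{every} pump, so no marking can yield a contradiction.

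Concretely, take $u = 0^{n-1}$, $v = 0$, $x = \epsilon$, $y = 1$, $z = 1^{n-1}0^n1^{n+k}$. Then
\[
uv^ixy^iz \;=\; 0^{\,n-1+i}\,1^{\,n-1+i}\,0^{\,n}\,1^{\,n+k}\qquad(i\ge 0),
\]
which lies in $0^+1^+0^+1^+$ for all $i$ (assuming $n\ge 2$) and is never an abelian square: condition~(i) forces $n-1+i=n$, i.e.\ $i=1$; condition~(ii) forces $n-1+i=n+k$, i.e.\ $i=k+1$, but then $a'=n+k>n=c$. Hence $uv^ixy^iz\in L'$ for every $i$. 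This decomposition satisfies the Ogden constraints whenever the last $0$ of the first block or the first $1$ of the second block is marked (one marked position in $vy$, at most two in $vxy$). If your marking avoids those two positions, the symmetric decomposition at the third/fourth block boundary does the same damage; and if you push the marks entirely into the interior of one block, the decomposition with $v$ and $y$ both inside that block changes only one exponent and again never produces an abelian square. So for this $w_n$ there is no marking for which all admissible Ogden decompositions can be pumped out of $L'$; the ``bulk of the technical work'' you anticipate is not merely tedious here---it is impossible.

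The paper sidesteps exactly this obstruction by intersecting with a more rigid regular set $R=w_4^{*}w_3w_2^{*}w_3w_3^{*}$ (three runs with pairwise distinct gap-lengths, separated by sentinels) and choosing $z=w_4^{\,n}w_3w_2^{\,n!+n}w_3w_3^{\,2(n!+n)}$. The point of the three different block types is that any ``even'' pumped factor is forced to live inside a single $w_j$-run, and then the $n!$ in the exponents lets one pump the $w_4$-run up to match the others and manufacture an abelian square. Your four-block template is too symmetric to force that behaviour. If you want to keep the $0^+1^+0^+1^+$ idea, you would at minimum need a different witness (for instance one where the two $0$-blocks and the two $1$-blocks already differ by~$1$, so that pumping across the first boundary \emph{can} equalize $a'$ with $c$); but you should first convince yourself that $L\cap 0^+1^+0^+1^+$ is genuinely non-context-free before investing in the case analysis.
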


In order to prove this, we introduce some notation. Define $w_i = 10^{i-1}$ for all $i > 1$. Let $R = w_4^{*}w_3w_2^{*}w_3w_3^{*}$.

\begin{lemma}
The word $w_4^{n}w_3w_2^{n! + n}w_3w_3^{2(n! + n)} \in L \cap {(\Sigma^2)}^{*} \cap R$.
\label{three}
\end{lemma}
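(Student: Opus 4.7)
My plan is to check each of the three memberships separately. For $R$, the given factorization $w_4^n \cdot w_3 \cdot w_2^{n!+n} \cdot w_3 \cdot w_3^{2(n!+n)}$ lies directly in $w_4^{*} w_3 w_2^{*} w_3 w_3^{*}$, so membership is immediate. For $(\Sigma^2)^{*}$, it suffices to observe that the total length $4n + 3 + 2(n!+n) + 3 + 6(n!+n) = 8n! + 12n + 6$ is even.

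The substantive step is membership in $L$. Let $w$ denote the whole word and suppose, for contradiction, that $w$ is an abelian square $w = st$ with $|s|=|t|$ and equal Parikh vectors. Then $|w|_1$ must be even. Since each of $w_4$, $w_3$, $w_2$ contains exactly one $1$, summing gives $|w|_1 = n + 1 + (n!+n) + 1 + 2(n!+n) = 3n! + 4n + 2$, which is odd for $n \in \{0,1\}$; this disposes of those cases.

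For $n \ge 2$ the key structural observation is that $|w|/2 = 4n! + 6n + 3 < 6(n!+n) = |w_3^{2(n!+n)}|$ (the inequality uses $n! \ge 2$), so the suffix $t$ sits entirely inside the final block and is therefore a suffix of $(100)^{2(n!+n)}$. A short case check shows that any suffix of $(100)^k$ of length $\ell \le 3k$ contains exactly $\lfloor \ell/3 \rfloor$ occurrences of $1$; hence $|t|_1 = \lfloor (4n! + 6n + 3)/3 \rfloor$. The abelian-square hypothesis forces $|t|_1 = |w|_1/2$, i.e., $2\lfloor (4n! + 6n + 3)/3 \rfloor = 3n! + 4n + 2$. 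I would close by showing this equation fails for every $n \ge 2$: at $n = 2$ it reads $14 = 16$; at $n \ge 3$ we have $3 \mid n!$, so the floor is exact and the equation reduces to $8n! = 9n!$. The only mild obstacle I foresee is handling the boundary $n = 2$, where the floor is not tight, but a direct computation dispatches it; otherwise the argument is a clean Parikh-count contradiction.
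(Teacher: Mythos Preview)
Your proof is correct and follows the same approach as the paper: compute the total number of $1$'s, $3n!+4n+2$, and compare it with twice the number of $1$'s in the second half, which lies inside the trailing $w_3$-block. The paper states the second-half count directly as $\tfrac{4n!}{3}+2n+1$ (tacitly assuming $n\ge 3$), whereas you handle $n\in\{0,1\}$ by parity and $n=2$ by a direct check before reducing to the same identity $8n!=9n!$; this extra care is a welcome refinement, not a different method.
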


\begin{proof}
Suppose $z = w_4^{n}w_3w_2^{n! + n}w_3w_3^{2(n! + n)}$. Clearly $z \in {(\Sigma^2)}^{*} \cap R$. Suppose $m$ is the total number of $1$'s in $z$. Then $m = 3n! + 4n + 2$. However, the number of $1$'s in the second half of $z$ is $\frac{4n!}{3} + 2n + 1$, which is not equal to $\frac{m}{2}$. Hence $z$ is not an abelian square. Thus $z \in L$.
\end{proof}

\begin{definition}
Let $w = 0^{s_0}10^{s_1}\cdots10^{s_k}$ be a word over the alphabet $\Sigma$. We define $$\alt(w) = | \{1 \leq i < k : s_i \neq s_{i + 1}\} |.$$
Define $\alt(\cdot)$ over a language $K$ as follows: $ \alt(K) = \max_{w \in K}{\alt(w)}$.
\end{definition}

\begin{definition}
 A sequence of non-negative integers $(a_k)_{k = 1}^n$ is called \textit{uneven} if $n > 1$ and $\exists i \in [1,n]$ such that $a_i \neq a_{i + 1}$. Here $a_{n + 1} = a_1$. Otherwise, it is called \textit{even}.
\end{definition}

\begin{definition}
Suppose $w$ is a word that contains a $1$. Then $w$ is called \textit{uneven} if the sequence $(s_1 + s_{k + 1}, s_2, \ldots, s_k)$ is uneven, where $w = 0^{s_1}10^{s_2}1\cdots0^{s_k}10^{s_{k+1}}$. Otherwise, it is called \textit{even}.
\end{definition}

\begin{example}
   Suppose $w$ is an even word and of the form $0^{s_1}10^{s_2}1\cdots0^{s_k}10^{s_{k+1}}$. Now we consider what $w$ looks like. Since $w$ is even, we get that $s_1 + s_{k + 1} = s_2 =\cdots = s_k$. Then $w = 0^s1(0^{s + t}1)^k0^t$ for some $s,t,k \geq 0$.
\label{seven}
\end{example}

\begin{lemma}
If $w$ is an uneven word, then $\alt(w^k) \geq k - 1$.
\label{eight}
\end{lemma}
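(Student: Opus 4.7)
The plan is to compute the block decomposition of $w^k$ explicitly and then exhibit $k-1$ pairwise disjoint ranges of adjacent-block positions, each forced (by the uneven hypothesis) to contain an alternation. Setting up notation, I would write $w = 0^{a_0}10^{a_1}1\cdots10^{a_m}$, so the uneven hypothesis says that the length-$m$ cyclic sequence $(a_m + a_0, a_1, a_2, \ldots, a_{m-1})$ is non-constant, which in particular forces $m \geq 2$. Next I would observe that in $w^k$ the trailing $0^{a_m}$ of each copy of $w$ merges with the leading $0^{a_0}$ of the next into a single block $0^{a_m + a_0}$. Hence the block-size sequence of $w^k$ reads $a_0$ followed by $k-1$ copies of $(a_1, a_2, \ldots, a_{m-1}, a_m + a_0)$ followed by $(a_1, a_2, \ldots, a_{m-1}, a_m)$, giving $km + 1$ blocks that I will denote $b_0, b_1, \ldots, b_{km}$.

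The heart of the argument will be a disjoint-alternation count. For each $t \in \{1, \ldots, k-1\}$, I would consider the length-$(m+1)$ sub-sequence $b_{(t-1)m+1}, b_{(t-1)m+2}, \ldots, b_{tm+1}$, which by the block formula always reads $(a_1, a_2, \ldots, a_{m-1}, a_m + a_0, a_1)$: one complete cyclic traversal of $(a_m + a_0, a_1, \ldots, a_{m-1})$. If none of the $m$ adjacent pairs in this traversal were an alternation, then $a_1 = a_2 = \cdots = a_{m-1} = a_m + a_0$, making the cyclic sequence constant and contradicting unevenness. So each such traversal supplies at least one alt-index inside $\{(t-1)m+1, \ldots, tm\}$.

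Finally, the ranges $\{(t-1)m+1, \ldots, tm\}$ for $t = 1, \ldots, k-1$ partition $\{1, 2, \ldots, (k-1)m\}$ into consecutive length-$m$ blocks and are therefore pairwise disjoint, and they all lie inside the valid alt-index range $\{1, \ldots, km-1\}$ since $m \geq 2$. Collecting one alt-index from each range produces $k-1$ distinct alternation indices, so $\alt(w^k) \geq k-1$. I do not anticipate a genuine obstacle: the only subtlety is lining the indexing up so that each chosen sub-sequence is exactly a full cyclic traversal, which is the precise moment at which the uneven hypothesis is used.
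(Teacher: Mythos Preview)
Your argument is correct and follows essentially the same route as the paper: both expand $w^k$ so that $k-1$ identical ``period'' segments appear, and then observe that the uneven hypothesis forces at least one alternation inside each segment, yielding $k-1$ disjoint alternation indices. Your version is simply a more carefully indexed rendition of the paper's two-line proof, with the minor refinement of extending each segment by one block so that the adjacent pairs run through a full cyclic traversal of $(a_m+a_0,a_1,\ldots,a_{m-1})$ rather than just a linear one.
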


\begin{proof}
Suppose $w = 0^{s_1}10^{s_2}1\cdots0^{s_k}10^{s_{k+1}}$. Then $w^k = 0^{s_1}(10^{s_2}\cdots10^{s_k}10^{s_{k + 1} + s_1})^{k - 1}10^{s_2}\cdots10^{s_k}\\10^{s_{k + 1}}$.  Since $w$ is uneven, we get $\alt(10^{s_2}\cdots10^{s_k}10^{s_{k + 1} + s_1}) \geq 1$. It follows that $\alt(w^k) \geq k - 1$.
\end{proof}

\begin{lemma}
Let $T = L \cap {(\Sigma^2)}^{*} \cap R$. Then $T$ is not context-free.
\label{nine}
\end{lemma}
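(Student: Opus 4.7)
I will assume $T$ is context-free and derive a contradiction via the pumping lemma (equivalently, Ogden's lemma with all positions distinguished) applied to the words $z_n$ of Lemma~\ref{three}, for $n$ chosen large in terms of the pumping constant $N$. This yields a factorisation $z_n = uvwxy$ with $|vwx| \leq N$, $|vx| \geq 1$, and $u v^i w x^i y \in T$ for every $i \geq 0$; the goal is to show that no such factorisation can exist.

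The first reduction uses Lemma~\ref{eight}. If either $v$ or $x$ is an uneven word, then the repetitions $v^i$ appearing inside the pumped word give $\alt(u v^i w x^i y) \geq \alt(v^i) \geq i - 1$, so for $i$ sufficiently large the pumped word has $\alt > 3$. But every word of $R = w_4^* w_3 w_2^* w_3 w_3^*$ has $\alt \leq 3$ (one can check directly from the form $3^a 2\, 1^b 2\, 2^c$ of the gap sequence of a word in $R$), so $u v^i w x^i y \notin R \supseteq T$, a contradiction. Hence both $v$ and $x$ must be even. Writing $z_n = w_4^n\, w_3\, w_2^M\, w_3\, w_3^{2M}$ with $M = n!+n$, the characterisation of even words in Example~\ref{seven}, combined with $|vwx| \leq N$, shows that $v$ and $x$ are cyclic rotations of powers of $w_4$, $w_2$, or $w_3$, possibly straddling the short $w_3$ anchors in a way that, under pumping, preserves the shape of $R$. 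Consequently the pumped word has the form $w_4^{a(i)}\, w_3\, w_2^{b(i)}\, w_3\, w_3^{c(i)}$, where $a(i), b(i), c(i)$ are affine in $i$ with bounded, non-negative increments $\delta_a, \delta_b, \delta_c$ satisfying $\delta_a + \delta_b + \delta_c \geq 1$.

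It then remains to produce an integer $i \geq 0$ for which this word is an abelian square, which would contradict $T \subseteq L$. Equating the number of $1$s in the two halves yields a linear equation in $i$ whose coefficients depend on $(\delta_a, \delta_b, \delta_c)$ and on which block contains the midpoint, a location that itself varies with $i$ and may cross block boundaries as $i$ grows. The choice $M = n!+n$ is what makes the argument go through: since $M - n = n!$ is divisible by every integer at most $N \leq n$, the resulting arithmetic always admits a non-negative integer solution $i$. For example, if $(\delta_a, \delta_b, \delta_c) = (\delta, 0, 0)$ then $i = n!/\delta + 1$ pumps $n$ up to $M$, producing the abelian square $w_4^M\, w_3\, w_2^M\, w_3\, w_3^{2M}$ (one checks it has exactly $2M+1$ ones on each side of its midpoint); when $\delta_b > 0$, a larger $i$ sends the midpoint into $w_2^{b(i)}$ and solves an analogous but distinct congruence to equalise the $1$-counts. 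The main obstacle in the write-up is the bookkeeping of the case analysis: verifying in every placement of $vwx$ that the intended $i$ is non-negative and that the midpoint-location assumption used in the abelian-square calculation is consistent with that choice of $i$.
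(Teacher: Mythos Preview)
Your plan has a real gap: the ordinary pumping lemma (Ogden with every position marked) is too weak for this language. Nothing then forces the window $vwx$ to meet the $w_4^{\,n}$ block; it can sit entirely inside $w_2^{M}$ or inside $w_3^{2M}$. Take, concretely, $v = x = 10$ with $w = \epsilon$ somewhere in the middle of $w_2^{M}$. Every pumped word is
\[
w_4^{\,n}\, w_3\, w_2^{\,M+2(i-1)}\, w_3\, w_3^{\,2M},
\]
which lies in $R \cap (\Sigma^2)^*$ for all $i \ge 0$. A direct count of $1$'s in the two halves shows that this word is an abelian square only when the $w_2$-exponent equals $n$, $n-2$, or $n-4$; but for every $i \ge 0$ that exponent is at least $M-2 = n!+n-2 > n$. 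Hence every pumped word remains in $L$, and therefore in $T$: this decomposition survives pumping. The same phenomenon occurs with $v = x = 100$ inside $w_3^{2M}$. Your claimed ``analogous congruence'' in the $\delta_b > 0$, $\delta_a = 0$ case thus has no non-negative integer solution, and the contradiction is not reached.

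This is exactly why the paper marks only the $4n$ positions of $w_4^{\,n}$ and invokes Ogden's lemma proper: condition~A then guarantees that $vx$ meets the $w_4^{\,n}$ block, so in your notation $\delta_a > 0$ in every admissible decomposition. Once that is secured, your outline matches the paper's argument --- one pumps the $w_4$-exponent from $n$ up to $n!+n$ (using that $n!$ is divisible by the number of $1$'s contributed per pump), obtaining an abelian square in each subcase. But without the targeted marking the case $\delta_a = 0$ cannot be excluded, and the proof does not close.
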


\begin{proof}
(By \textit{Ogden's Lemma}) For any $n > 4$, let $z = w_4^{n}w_3w_2^{n!
+ n}w_3w_3^{2(n! + n)}$. By Lemma~\ref{three} we see that $z \in T$. Mark the
first $4n$ bits  of $z$, that is, the bits corresponding to $w_4^{n}$.
Let $m(s)$ denote the number of bits marked in $s$. Now we show by
contradiction that no decomposition $z = u_0v_0w_0x_0y_0$ satisfies all
the following three conditions:

   \begin{enumerate}
   \item{condition A: $m(v_0x_0) > 0$}
   \item{condition B: $m(v_0w_0x_0) \leq n$}
   \item{condition C: $\forall i \geq 0, u_0v_0^{i}w_0x_0^{i}y_0 \in T$}
   \end{enumerate}

Before further consideration into all possible decompositions, we first mark $z$ with different colors. Mark the bits corresponding to $w_4^n$ \textrm{red}. Mark the next 3 bits corresponding to $w_3$ \textrm{blue}. Mark the bits corresponding to $w_2^{n! + n}$ \textrm{green}. Mark the bits corresponding to $w_3w_3^{2(n! + n)}$ \textrm{black}. Define a new function $m(\textrm{color},x)$ as the number of bits in $x$ colored $\textrm{color}$. Note that $m(x)$ in our former definition is the same as $m(\textrm{red},x)$. Here is a picture of how $z$ is colored:
$$\underbrace {\text{ }{w_4}{w_4}\cdots{w_4}\text{ }}_{\textrm{red}}\underbrace{{w_3}}_{\textrm{blue}}\underbrace{\text{ }{w_2}\cdots{w_2}\text{ } }_{\textrm{green}}\underbrace{\text{ }{w_3}{w_3}\cdots{w_3}\text{ }}_{\textrm{black}}$$

Now we list all possible cases.
\begin{enumerate}[(i)]
   \item{
       Either $v$ or $x$ is the empty word. Without loss of generality, suppose $x$ is empty.
       \begin{enumerate}[(i)]
           \item{
               $v \in 0^{+}$.
           }
           \item{
               $v$ contains a $1$ and $v$ is uneven.
           }
           \item{
               $v$ contains a $1$ and $v$ is even.
           }
       \end{enumerate}
   }
   \item{
        Both $v$ and $x$ are non-empty words.
       \begin{enumerate}[(i)]
           \item{
               $v \in 0^{+}$ or $x \in 0^{+}$.
           }
           \item{
               Both $v$ and $x$ contain a $1$; $v$ is uneven or $x$ is uneven.
           }
           \item{
               Both $v$ and $x$ contain a $1$ and are even.
               \begin{enumerate}[(i)]
                   \item{
                       $m(\textrm{red},v) = 0$, which means that $w_4^n$ precedes $v$ in $z$.
                   }
                   \item{
                       $m(\textrm{red},v) > 0$.
                       \begin{enumerate}[(i)]
                           \item{
                               $m(\textrm{red},x) > 0$.
                           }
                           \item{
                               $m(\textrm{red},x) = 0$ and $m(\textrm{green},x) > 0$
                               \begin{itemize}
                                   \item{
                                       $m(\textrm{blue},x) = 3$.
                                   }
                                   \item{
                                       $m(\textrm{blue},x) = 2$.
                                   }
                                   \item{
                                       $m(\textrm{blue},x) = 1$.
                                   }
                                   \item{
                                       $m(\textrm{blue},x) = 0$ and $m(\textrm{green},x) > 1$.
                                   }
                                   \item{
                                       $m(\textrm{blue},x) = 0$ and $m(\textrm{green},x) = 1$.
                                   }
                               \end{itemize}
                           }
                           \item{
                               $m(\textrm{red},x) = m(\textrm{green},x) = 0$ and $m(\textrm{black},x) > 0$
                           }
                       \end{enumerate}
                   }
               \end{enumerate}
           }
       \end{enumerate}
   }
\end{enumerate}

Suppose there exists a decomposition $z = uvwxy$ satisfying the above three conditions simultaneously.

\medskip

Case \rmnum{1}: First we consider the case when either $v$ or $x$ is empty. Without loss of generality, suppose $x$ is empty. Then $v$ cannot be empty, since $vx$ is non-empty.

\medskip

Case \rmnum{1}.\rmnum{1}: Suppose $v = 0^{k}$ for some $k \in \mathbb{N}^{+}$. Then we select $i = 4$. Since there are more than $3$ successive $0$'s in $v^4$, this is also true for $uv^4wx^4y$. However, no word in $T$ contains more than $3$ successive $0$'s. Hence we get a contradiction.

\medskip

Case \rmnum{1}.\rmnum{2}: Suppose $v$ contains a $1$ and is uneven. We pick $i = 6$. Then $\alt(v^6) \geq 5 > \alt(R) = 4$ by Lemma~\ref{eight}. So $uv^6wx^6y \not\in T$, which violates condition C.

\medskip

Case \rmnum{1}.\rmnum{3}: Now we consider when $v$ is even. In this
case $v$ can be written in the form $0^{k}1(0^{k + s}1)^{p}0^{s}$ for
some $k,s,p \in \mathbb{N}$ (as we mentioned in Example~\ref{seven}). Then it
follows that $m(\textrm{green},v) = 0$ and $k + s = 3$ by the following
argument. Suppose $m(\textrm{green},v) > 0$. Then $m(\textrm{blue},v) =
3$. That is to say, the $w_3$ between $w_4$'s and $w_2$'s lies in $v$.
Then $v$ must be of the form $r_1 01001 r_2$ for some words $r_1$ and
$r_2$.  It follows that $k + s = 2$, since $v$ is even. Now we select
$i = 2$. Then $uv^2wx^2y = w_4^{n}w_3^{l}w_2^{n! + n}w_3w_3^{2(n! +
n)}$ for some $l > 1$, which violates condition C. Now suppose $k + s
\neq 3$. Then we pick $i = 2$. It follows that $uv^2wx^2y$ is of the
form $w_4^{l}w_{k + s + 1}^{2 + 2p}w_4^{j}w_3w_2^{n! + n}w_3w_3^{2(n! +
n)} \not \in T$, which violates condition C again. Now let $i =
\frac{n!}{1 + p}$. It follows that $z_i = uv^iwx^iy = w_4^{n! +
n}w_3w_2^{n! + n}w_3w_3^{2(n! + n)} = (w_4^{n! + n}w_3w_2^{n! +
n})(w_3w_3^{2(n! + n)})$ is an abelian square, a contradiction.
\medskip

Case \rmnum{2}: Both $v$ and $x$ are non-empty. In this case, we first
show that both $v$ and $x$ contain a $1$. Then, we show $v$ and $x$ are
even. Finally we rule out all subcases under the condition that $v$ and
$x$ are even.

\medskip

Case \rmnum{2}.\rmnum{1}: Suppose $v = 0^{k}$ or $x = 0^{l}$ for some
$k,l \in \mathbb{N}^{+}$. By a similar analysis in Case
\rmnum{1}.\rmnum{1}, we get that this case violates condition C.

\medskip

Case \rmnum{2}.\rmnum{2}: Suppose $v$ is uneven. By a similar analysis in Case \rmnum{1}.\rmnum{2}, we see that this case violates condition C. The same applies to the case when $x$ is uneven.

\medskip

Case \rmnum{2}.\rmnum{3}: Now it remains to consider when $v$ and $x$ are even. Suppose $v = 0^{k}1(0^{k + s}1)^p0^{s}$ for some $k,s,p \in \mathbb{N}$, and $x = 0^{c}1(0^{c + d}1)^e0^{d}$ for some $c,d,e \in \mathbb{N}$.

\medskip

Case \rmnum{2}.\rmnum{3}.\rmnum{1}: First of all we consider the case when $m(\textrm{red},v) = 0$. Then $m(\textrm{red},x) = 0$ since $x$ precedes $v$ in $z$. It follows that $m(\textrm{red},vx) = 0$, which violates condition A.

\medskip

Case \rmnum{2}.\rmnum{3}.\rmnum{2}: Now we turn to the case when $m(\textrm{red},v) > 0$. By the same argument in Case \rmnum{1}.\rmnum{3}, we get that $m(\textrm{green},v) = 0$ and $k + s = 3$. Note that $p < n$, for otherwise the condition $m(\textrm{green},v)  = 0$ cannot be satisfied. Now we consider the following subcases:

\medskip

Case \rmnum{2}.\rmnum{3}.\rmnum{2}.\rmnum{1}: If $m(\textrm{red},x)  > 0$, then $m(\textrm{green},x) = 0$ and $c + d = 3$. After selecting $i = \frac{n!}{2 + p + e}$, we get that $z_i = uv^iwx^iy = w_4^{n! + n}w_3w_2^{n! + n}w_3w_3^{2(n! + n)} = (w_4^{n! + n}w_3w_2^{n! + n})(w_3w_3^{2(n! + n)})$ is an abelian square, which violates condition C again.

\medskip

Case \rmnum{2}.\rmnum{3}.\rmnum{2}.\rmnum{2}: If $m(\textrm{red},x) = 0$ and $m(\textrm{green},x)  > 0$, then $m(\textrm{blue},x) < 3$, for otherwise $x$ cannot be even. There are again four subcases here.
   \begin{enumerate}
   \item{
       The first subcase is $m(\textrm{blue},x) = 2$. Then $x$ is in the form $001 r_1$, where $r_1$ is any word. We see that (a) $r_1 = \epsilon$ or (b) $r_1 = 0$, for otherwise $x$ cannot be even. (a) If $r_1 = \epsilon$, then $x = 001$. We pick $i = 2$. It follows that $uv^2wx^2y = w_4^{n + 2p + 2}w_3^{2}w_2^{n! + n}w_3w_3^{2(n! + n)} \not\in T$, which voilates condition C. (b) If $r_1 = 0$, then $x = 0010$. We pick $i = 2$. It follows that $uv^2wx^2y = w_4^{n + 2p + 2}w_3w_4w_2^{n! + n}w_3w_3^{2(n! + n)} \not\in T$, which also violates condition C.
   }
   \item{
       The second case is $m(\textrm{blue},x) = 1$. Here is a picture.
       $$z = {w_4}{w_4}\cdots{w_4}10\underbrace {\text{ }0{w_2}\cdots}_x\cdots{w_2}{w_3}\cdots{w_3}$$
       Thus (a) $x = 010$ or (b) $x = (01)^l$ for some $l >0$. (a) If $x = 010$, then we pick $i = 2$. It follows that $uv^2wx^2y = w_4^{n + 2p + 2}w_3^2w_2^{n! + n}w_3w_3^{2(n! + n)} \not\in T$. (b) Otherwise $x = (01)^l$ for some $l >0$. After picking $i = \frac{n!}{1 + p}$, we get
       \begin{align*}
           z_i &= w_4^{n! + n}w_3w_2^{n! + n + \frac{ln!}{1 + p}}w_3w_3^{2(n! + n)}\\
               &= (w_4^{n! + n}w_3w_2^{n! + n})w_2^{\frac{ln!}{1 + p}} (w_3w_3^{2(n! + n)})\\
               &= (w_4^{n! + n}w_3w_2^{n! + n}w_2^{\frac{ln!}{2(1 + p)}})(w_2^{\frac{ln!}{2(1 + p)}}w_3w_3^{2(n! + n)}).\\
       \end{align*}
       Note that $\frac{ln!}{2(1 + p)}$ is an integer since $n > 4$. Therefore $z_i$ is an abelian square. Hence $z_i \not\in T$, a contradiction.
   }
   \item{
       The third case is $m(\textrm{blue},x) = 0$ and $m(\textrm{green},x) > 1$. Similarly $x$ must be of the form $(01)^l$ or $(10)^l$ for some $l \in \mathbb{N}^{+}$, since $x$ is even. We pick $i = \frac{n!}{1 + p}$ and find the same result as in the second case.
   }
   \item{
       The last case is exactly when $m(\textrm{blue},x) = 0$ and $m(\textrm{green},x) = 1$. Then $x$ has to be the first or the last letter of the substring $w_2^{n! + n}$ of $z$, since $x$ cannot be a single $0$ or $1$ (there are no successive $1$'s in elements of $R$). Moreover, we find that $x$ cannot be the first letter of $w_2^{n! + n}$, since $m(\textrm{blue},x) = 0$. It follows that $x$ is the trailing $0$ of $w_2^{n! + n}$. Under this circumstance, we find that $x = 0(100)^e10$, since $x$ is even. Now we select $i = \frac{n!}{1 + p}$. It follows that
       \begin{align*}
       z_i&= uv^iwx^iy \\
          &= w_4^{n! + n}w_3w_2^{n! + n}w_3w_3^{2(n! + n) + \frac{(1 + e)n!}{1 + p}} \\
          &= (w_4^{n! + n}w_3w_2^{n! + n})w_3^{\frac{(1 + e)n!}{1 + p}}(w_3w_3^{2(n! + n)})\\
          &= (w_4^{n! + n}w_3w_2^{n! + n}w_3^{\frac{(1 + e)n!}{2(1 + p)}})(w_3^{\frac{(1 + e)n!}{2(1 + p)}}w_3w_3^{2(n! + n)}).\\
    \end{align*}
        Note that $\frac{(1 + e)n!}{2(1 + p)}$ is an integer since $n > 4$. Thus $z_i$ is an abelian square. Therefore $z_i \not\in T$, a contradiction.
   }
   \end{enumerate}

Case \rmnum{2}.\rmnum{3}.\rmnum{2}.\rmnum{3}: The last possible subcase is when $m(\textrm{red},x) = 0$ and $m(\textrm{green},x)  = 0$ and $m(\textrm{black},x)  > 0$. In this case, we get that $c + d = 2$. Now we pick $i = \frac{n!}{1 + p}$. It follows that
   \begin{align*}
       z_i&= uv^iwx^iy \\
          &= w_4^{n! + n}w_3w_2^{n! + n}w_3w_3^{2(n! + n) + \frac{(1 + e)n!}{1 + p}} \\
          &= (w_4^{n! + n}w_3w_2^{n! + n})w_3^{\frac{(1 + e)n!}{1 + p}}(w_3w_3^{2(n! + n)})\\
          &= (w_4^{n! + n}w_3w_2^{n! + n}w_3^{\frac{(1 + e)n!}{2(1 + p)}})(w_3^{\frac{(1 + e)n!}{2(1 + p)}}w_3w_3^{2(n! + n)}).\\
    \end{align*}
Note that $\frac{(1 + e)n!}{2(1 + p)}$ is an integer since $n > 4$. Thus $z_i$ is an abelian square. Therefore $z_i \not\in T$, a contradiction again.

With the above discussion, we claim that no decomposition of $z$ can satisfy all three conditions simultaneously. Thus $T$ is not context-free.
\end{proof}

\begin{proof}
Combining Lemma~\ref{nine} with the closure property, that the language obtained by intersecting a context-free language and a regular language is still context-free, we finally obtain this conclusion.
\end{proof}

\end{document}